\newcommand{\E}{\mathbb{E}}
\newcommand{\X}{\mathbb{X}}
\newcommand{\Y}{\mathbb{Y}}
\def\I{\mbox{I}}
\newcommand{\N}{\mathds{N}}
\def\Pn{\mbox{Pn}}
\def\E{\mbox{E}}
\def\V{\mbox{V}}
\newcommand{\Ga}{\Gamma}
\def\N{\mbox{N}}
\numberwithin{equation}{section}
\theoremstyle{plain}
\newtheorem{thm}{Theorem}[section]
\theoremstyle{definition}
\theoremstyle{remark}
\begin{document}

\title{Modelling and computation using NCoRM mixtures for density regression}

\author{Jim E. Griffin
and Fabrizio Leisen\footnote{{\it Corresponding author}: Jim E. Griffin, School of Mathematics, Statistics and Actuarial Science, University of Kent, Canterbury CT2 7NF, U.K. {\it Email}: 
jeg28@kent.ac.uk}\\
University of Kent}

\date{}



\maketitle

\abstract{
Normalized compound random measures are flexible nonparametric priors for related distributions. We consider building general nonparametric regression models using
normalized compound random measure mixture models. 
Posterior inference is made using a novel pseudo-marginal Metropolis-Hastings sampler for 
normalized compound random measure mixture models. The algorithm makes use of a new general approach to the unbiased estimation of Laplace functionals of compound random measures (which includes completely random measures as a special case). The approach is illustrated on problems of density regression.
}
\\
\noindent\textbf{Keyword}: Dependent random measures;  Mixture models; Multivariate L\'evy measures; Pseudo-marginal samplers; Poisson estimator.



\section{Introduction}

The problem of Bayesian nonparametric inference for distributions at different regressor values has been an extremely active area of research. Many approaches use dependent nonparametric mixture models and build on the idea of dependent Dirichlet process mixture models \citep{MacEachern}, which generalized the commonly-used Dirichlet process mixture model. A generic dependent nonparametric mixture model
assumes that a sample $y_1,\dots,y_n$ observed at regressor values $x_1,\dots,x_n$ (where $x\in\X$ for some measureable space $\X$)  is modelled as
\begin{equation}
y_i\vert x_i \sim q(y_i \vert \theta_{c_i}(x_i)),
\qquad p(c_i=k) = w_k(x_i),\quad k=1,\dots,\infty
\label{DDP}
\end{equation}
where $q(y\vert \theta)$ is a distribution for $y$
(where $y\in\Y$ for some measureable space $\Y$) 
 with parameter $\theta$, $w_k(x)\geq 0$ for all $k$ and $x\in\X$, $\sum_{k=1}^{\infty} w_k(x) = 1$ almost surely for all $x\in \X$ and $\theta_1(x), \theta_2(x), \theta_3(x), \dots$ are independent realisations of a stochastic process. We refer to $\theta_1(x), \theta_2(x), \theta_3(x), \dots$ as the locations of the mixture components. The model simplifies to a nonparametric mixture model if the sample is observed at a single regressor value. 

Many approaches to constructing specific models in the form of (\ref{DDP}) generalize the stick-breaking construction of the Dirichlet process \citep{sethuraman} and these were reviewed in \cite{dun10}. Alternatively,
models can be constructed by normalising dependent random measures. This generalizes the approach introduced by \cite{rlp} to an arbitrary dimension. These constructions  have several advantages. Firstly, the weights $w_1(x), w_2(x), \dots$ are not ordered, as is the case with many stick-breaking constructions. Secondly, dependence is defined at the level of the weights $w_k(x)$ rather than, as is typical in stick-breaking constructions, through a non-linear transformation of the weights. \cite{fotwil12} defined a wide-class of such process using normalized kernel-weighted random measures, which generalize the approach to time-dependent random measures in \cite{Gri11}. \cite{GKS13} developed an approach to modelling a finite set of dependent random measures using superpositions of completely random measure 
\citep[see also][]{NL14, NLP14a, NLP14b, Chen13}. Alternatively, dependence can be modelled through a L\'evy copula \citep{LL, LLS, ZL}.
Compound random measures (CoRM) \citep{GrLei16} are a unifying framework for many dependent random measures including many of the superposition and L\'evy copula approaches.  They have been applied to modelling graphs for overlapping communities by \cite{caron16}.
\cite{GrLei16} described posterior sampling methods for a particular class of normalized compound random measure mixtures which exploits a representation of the Laplace transform of a CoRM through a univariate integral of a moment generating function. 
 \cite{RanBlei15} independently developed a normalized CoRM model where the weights depend on a Gaussian process and described a variational Bayesian algorithm for inference.

In this paper, we will consider extending the class of compound random measures (CoRM)  from finite collections of distributions to infinite collections of distributions. 
This allows us to define CoRM models where the weights follow a time series model, the weights follow a regression model or the weights are defined through a hierarchical model. 
The computational algorithms in \cite{GrLei16}  cannot be used in this wider class of models since moment generating functions are not available in closed form.
Therefore, we develop a new MCMC algorithm for CRM-based nonparametric mixture models which uses a novel pseudo-marginal MCMC method \citep{AR2009}.

The paper is organized in the following way. Section 2 discusses defining NCoRM mixture models for distributions indexed by continuous covariates. Section 3 introduces a novel computational algorithm for NCoRM mixtures which can be widely applied. Section 4 illustrates how NCoRM can be used in density regression problems and how the computational algorithm performs. Section 5 concludes. Matlab code for the examples in this paper is available from \\ \verb+http://www.kent.ac.uk/smsas/personal/jeg28/index.htm+.

\section{Modelling with normalized compound random measure mixtures}

For simplicity, we will consider mixture models of the form in (\ref{DDP}) with 
$\theta_{k}(x)=\theta_{k}$ for all $x\in\X$, leading to a mixture model with weights which  vary  over $\X$ (many of the ideas in this paper could be extended to the model where $\theta_k(x)$ follows a stochastic process, such as a Gaussian process, over $\X$). The model is 
\begin{equation}
y_i\vert x_i \sim q(y_i \vert \theta_{c_i}),
\qquad p(c_i=k) = w_k(x_i),\quad k=1,\dots,\infty.
\label{mod1}
\end{equation}
We consider the weights
\begin{equation}\label{mod1W}
w_k(x) = \frac{m_{k}(x) J_k}{\sum_{l=1}^{\infty} m_l(x) J_l}
\end{equation}
where $m_k(x)$ is a random function on $\X$ for which $m_k(x)\geq 0$ for all $x\in\X$ 
and the function $m_k$ is independent of $m_l$, $J_1, J_2, J_3\dots,$ are the jumps of the process with directing L\'evy process $\nu^{\star}$ and $\theta_1, \theta_2, \theta_3, \dots$ are i.i.d.  We will refer to $m_k(x)$ as a {\it score} or {\it score function} and to $\nu^{\star}$ as the {\it directing L\'evy process}.
The model reduces to the NCoRM models considered by \cite{GrLei16} if $\X$ is a finite set. In particular, they introduced a  class of dependent random probability measures
$\tilde{p}_1,\tilde{p}_2,\dots,\tilde{p}_d$ which can be represented as 
\[
\tilde{p}_i
=\sum_{k=1}^{\infty} w_{ik}\delta_{\theta_k}
\]
with 
$$w_{ik}=\frac{m_{ik} J_k}{\sum_{k=1}^{\infty} m_{ik} J_k},$$
where $(m_{1k},\dots,m_{dk})$ are i.i.d. draws from a $d$-variate \textit{score} distribution $h$ (for $k=1,2,\dots$), $J_1, J_2, J_3\dots,$ are the jumps of the process with \textit{directing L\'evy process} $\nu^{\star}$ and $\theta_k\stackrel{i.i.d.}{\sim} \tilde{\alpha}$, with $\tilde{\alpha}=\alpha/\alpha(\mathbb{Y})$ where $\alpha$ is a positive finite measure. Under suitable conditions, the vector $(\tilde{p}_1,\tilde{p}_2,\dots,\tilde{p}_d)$ can be seen as a vector of normalized completely random measures, i.e. 
$$\tilde{p}_j=\frac{\tilde{\mu}_j}{\tilde{\mu}_j(\mathbb{Y})}\qquad j=1,...,d$$
where $(\tilde{\mu}_1,\tilde{\mu}_2,\dots,\tilde{\mu}_d)$ is a \textit{Compound Random Measure} (CoRM).
The model introduced in \eqref{mod1} assumes that $\mathbb{X}$ could potentially be a countable set. In this case, we assume that, for every finite subset $S=\{s_1,\dots,s_l\}$ of covariates, the $w(s_1),\dots,w(s_l)$ displayed in equation \eqref{mod1W} are the weights of a $l$-dimensional NCoRM process.

The specification of the weights displayed in equation \eqref{mod1W} has several attractive features which motivate our choice. Firstly, 
 the nonparametric approach allows the definition of a flexible model for density regression.
Secondly, the dependence between $w_k(x)$ and $w_k(x')$ for $x, x'\in\X$ can be controlled by the choice of the distribution for the random function $m_k$. Many methods have been developed to model such random functions and can be used to define a suitable dependent nonparametric mixture model.
Thirdly, the weights are not {\it a priori} stochastically ordered (as with many stick-breaking processes). Lastly, the structure of the model allows simpler computational methods to be developed than many other dependent extensions of normalized random measures.

We will concentrate on models where $m_l(x) = \exp\{r_l(x)\}$ and $r_l(x)$ is a random function on $\X$ taking value on $\mathbb{R}$. \cite{GrLei16} considered using the variance of the ratio of the same jump at values $x, x' \in \X$ as 
a simple measure of the strength of dependence between the (unnormalized) random measure at values $x$ and $x'$. In this case, the ratio is  $\zeta(x, x')=m_l(x)/m_l(x') = \exp\{r_l(x) - r_l(x')\}$ and the distribution of $\zeta(x, x')$ will often be easy to work with. For example,  $\zeta(x, x')$ will be log normally distributed if $r_l(x)$ and $r_l(x')$ have a bivariate normal marginal distribution. 

In this paper, we will consider models in which $r_l(x)$ is a stochastic process for which 
$\E[r_l(x)]=0$ for all $x\in\X$. This gives CoRM models a high degree of flexibility. 
To illustrate the use of NCoRM mixtures in a regression context, we will consider a choice of $r_l(x)$ which is suitable for continuous regressors and a choice of $r_l(x)$ which is suitable for categorical regressors:
\begin{itemize}
\item \textbf{Continuous regressors:} In this case, we define
$r_1(x), r_2(x), r_3(x), \dots$ to be independent Gaussian processes with covariance function $\sigma^2_0 \kappa(\cdot, \cdot)$ where $\kappa(\cdot,\cdot)$ is a correlation function.
This implies that $\log\zeta(\cdot, \cdot)$ follows a normal distribution with mean zero and variance $2\sigma_0^2$.
\item \textbf{Categorical regressors:} Suppose that we have two categorical regressors then
we could assume a different parameter for each combination of levels so that
$r_l(x_i) = \gamma^{(l)}_{x_{i,1},x_{i,2}}$. Alternatively, we could use the
 specification 
$r_l(x_i)=\alpha^{(l)}_{x_{i,1}}+\beta^{(l)}_{x_{i,2}}+\gamma^{(l)}_{x_{i,1},x_{i,2}}$ where 
\[
\alpha^{(k)}_j\sim\N(0,\sigma^2_1),
\qquad 
\beta^{(k)}_j\sim\N(0,\sigma^2_2),\qquad
\gamma^{(k)}_{i,j}\sim\N(0,\sigma^2_{1,2}).
\]
Then, the $\alpha^{(l)}$ and $\beta^{(l)}$ parameters act as main effects and $\gamma^{(l)}$ as interactions which can be interpreted in a similar way to a logistic regression model. For example, $\log\zeta(x,x')$ is normally distributed with mean $0$ and variance $2(\sigma_1^2+\sigma_2^2+\sigma_{1,2}^2)$ if both levels of $x$ are different to the levels of $x'$. Whereas, $\log\zeta(x,x')$ is normally distributed with mean $0$ and variance $2(\sigma_2^2+\sigma_{1,2}^2)$ if only the second level of $x$ and $x'$ are different. This shows how the dependence of jump sizes depends on the levels of the regressors. 
\end{itemize}
Posterior inference is impossible using existing methods and the following section describes a general purpose algorithm for NCoRM mixture models.

\section{Computational methods}

Posterior inference for nonparametric mixture models is challenging due to the infinite-dimensional random probability measure in the model. To address this problem, two main MCMC approaches to defining a finite-dimensional target have been developed.
 Firstly, marginal methods integrate the  random probability measure from the posterior. Secondly, conditional methods truncate the random probability measure. These methods can be further divided into exact methods which use a random truncation to sample exactly from the posterior and methods which fix the level of truncation leading to some truncation error. \cite{GrLei16} suggest a marginal method and an exact conditional method (a slice sampler). 
 The availability of an analytical expression for the moment generating function for the score distribution is key to their sampling methods but this is impossible to evaluate in closed form for the more general NCoRM models described in this paper. We propose a hybrid conditional-marginal sampler using a pseudo-marginal Metropolis-Hastings algorithm \citep{AR2009}.

We assume that we observe data $(x_1,y_1),\dots,(x_n,y_n)$ and wish to fit the model in (\ref{mod1}). Without loss of generality, we also assume that the values $x_1,x_2,\dots,x_n$ are distinct and write $m_{k,i}=m_k(x_i)$
 and $m_k=(m_{k,1},\dots,m_{k,n})$.
Following  \cite{GrLei16}, it is convenient to use an augmented
 form of the likelihood which introduces an allocation variable for each observation.
Let $n_k$ be the number of observations allocated to the $k$-th jump, we 
 order the jumps so that $J_1,\dots,J_K$ have points allocated to them ({\it i.e.} $n_k>0$ for $1\leq k\leq K$) and $J_{K+1},J_{K+2},\dots$ have no points allocated to them ({\it i.e.} $n_k=0$ for $k>K$).
Marginalizing over jumps which have no points allocated and the location of all atoms and writing $M=\alpha(\mathbb{Y})$ and $\tilde\alpha = \alpha/\alpha(\mathbb{Y})$ gives
\begin{align}\label{MargPost}
M^K\prod_{k=1}^K \left\{ J_k^{n_k}\left[\prod_{i=1; s_i=k}^n m_{k,i}\right]\exp\left\{-\sum_{i=1}^n v_i J_k \,m_{k,i}\right\}
\,h(m_k)\,\nu^{\star}(J_k)
\right\}
L
\prod_{k=1}^K q\left(y^{(k)}\right)
\end{align}
where
\begin{align}
L&=\E\left[\exp\left\{-\sum_{i=1}^n v_i \sum_{k=1}^{\infty} J_k\, m_{k,i}\right\}\right]\nonumber\\
&=
\exp\left\{-\int_{(\mathbb{R}^+)^n}\int_0^{\infty} \left(1-\exp\left\{-z\sum_{i=1}^n v_i\,m_i\right\}\right) h(m^{\star})\,\nu^{\star}(z)\,dz\,dm^{\star}\right\},
\label{laplace1}
\end{align}
\[
q(y)=\int \prod q(y_i\vert \theta)\,\tilde\alpha(\theta)\,d\theta.
\]
and $y^{(k)}=\{y\vert s_i=k,\ 1\leq i\leq n\}$. The expression in \eqref{MargPost} is the (marginalized) likelihood of the data. 
\cite{GrLei16} use the analytical expression for $L$ and integrals over $J_1,\dots,J_K$ to define a marginal sampler. In general, these integrals are not analytically available to us. We replace $L$ by an 
 unbiased estimate $\hat{L}$ (a possible unbiased estimator is discussed in the next Section) to define the following target
\[
M^K\prod_{k=1}^K \left\{ J_k^{n_k}\left[\prod_{i=1; s_i=k}^n m_{k,i}\right]\exp\left\{-\sum_{i=1}^n v_i J_k \,m_{k,i}\right\}
 \,h(m_k)\,\nu^{\star}(J_k)\right\}\hat{L}
\prod_{k=1}^K q\left(y^{(k)}\right).
\]
Finally, we assume that $h$ has parameters $\tau$ and $\nu^{\star}$ has parameters $\xi$ on which we want to make inference and define the target
\begin{align*}
&p(\tau)p(\xi)p(M)M^K\prod_{k=1}^K \left\{ J_k^{n_k}\left[\prod_{i=1; s_i=k}^n m_{k,i}\right]\exp\left\{-\sum_{i=1}^n v_i J_k \,m_{k,i}\right\}
 \,h(m_k\vert \tau)\,\nu^{\star}_{\xi}(J_k)\right\}\\
 &\times\hat{L}
\prod_{k=1}^K q\left(y^{(k)}\right).
\end{align*}

We propose a novel sampling strategy for the variable $s$ in a nonparametric mixture model and a novel computational algorithm to deal with the Laplace transform component of the target above. This algorithm can be applied to posterior inference for a wide variety of Bayesian nonparametric processes beyond NCoRM processes. 

\subsubsection*{Updating $c$}

To update $c_i$, we write the full conditional distribution as proportional to
\begin{align*}
&\prod_{k=1}^{K^-_i} \left(J_k\, m_{k,i}\, k(\{y_j\vert c_j=k\}\cup y_i)\right)^{\I(c_i=k)}\left(M\,m_{K^-_i+1,i}\,
\gamma\left(m_{K^-_i+1}\right)\, k(y_i)\right)^{\I(c_i=K^-_i+1)}\\
&\times p\left(J_{K^-_i+1}\left\vert m_{K^-_i+1}\right.\right)
h\left(\left. m_{K^-_i+1}\right\vert \tau\right)
\end{align*}
where $K^-_i$ is the number of distinct values in $c_1,\dots,c_{i-1},c_{i+1},\dots,c_n$, $(J_1,m_1),\dots,(J_K,m_K)$ are ordered so that $c_i=K$,
\[
\gamma(m_k)=\int z\,\exp\left\{-z\sum_{i=1}^n v_i\, m_{k,i}\right\}\nu_{\xi}^{\star}(z)\,dz
\]
and 
\[
p\left(J_{K^-_i+1}\left\vert m_{K^-_i+1}\right.\right)=\frac{J_{K^-_i+1}\exp\left\{-J_{K^-_i+1}\sum_{i=1}^n v_i\, m_{K^-_i+1, i}\right\}\nu_{\xi}^{\star}(J_{K^-_i+1})}{\gamma(m_{K^-_i+1})}.
\]
A new value of $J_{K_i^-+1}$ is sampled from this full conditional distribution leading to an algorithm which is similar to Algorithm 8 of \cite{Neal2000}. See \cite{JLP09}, \cite{LP_chapter} and \cite{Favaro2013} for extension to non-conjugate normalized random measure mixtures.

If the $i$-th observation was allocated to a singleton cluster in the previous iteration, the full conditional distribution of $c_i$ is 
\[
p(c_i=j) \propto \left\{
\begin{array}{ll}
J_j\, m_{j,i}\,
\frac{q(\{y_k\vert c_k=j\}\cup y_i)}{q(\{y_k \vert c_k=j\})}  &  j=1,\dots,K^-_i\\
M\,m_{j,i}\,
\gamma(m_{j})
\,q(\{y_i\})
& j = K^-_i+1
\end{array}\right.
\]

If the $i$-th observation was not allocated to a singleton cluster in the previous iteration, 
we propose $m_{K^-_i+1}\sim h\left(\left.m_{K^-_i+1}\right\vert \tau\right)$ and $J_{K^-_i+1}\sim p\left(J_{K^-_i+1}\left\vert m_{K^-_i+1}\right.\right)$, then
\[
p(c_i=j) \propto \left\{
\begin{array}{ll}
J_j\, m_{j,i}\,
\frac{q(\{y_k\vert c_k=j\}\cup y_i)}{q(\{y_k \vert c_k=j\})}  & j=1,\dots,K^-_i\\
M\,m_{K^-_i+1,i}\,
\gamma\left(m_{K^-_i+1}\right)\,q(\{y_i\})
& j = K^-_i+1
\end{array}\right..
\]
In Appendix \ref{CompDet} we provide the details of the full conditional distributions for the variables $J_k$, $m_k$, $v_i$, $\xi$, $M$ and $\tau$. The next Section will introduce the novel pseudo-marginal Metropolis-Hastings algorithm used to address the intractability of the Laplace transform part of the target distribution.

\subsection{Unbiased estimation of the Laplace functional}

\cite{AR2009} introduced a sampling scheme, called \textit{pseudo-marginal} Metropolis-Hastings, which allows sampling from distributions which cannot be evaluated pointwise. The main idea of the method is to replace the target distribution with a nonnegative unbiased estimator. 

In our framework,  we are often interested in evaluating objects such as the expectation  in (\ref{laplace1}),
\begin{align*}
L=\exp\left\{-\int_{(\mathbb{R}^+)^p}\int_0^{\infty} \left(1-\exp\left\{-z\sum_{i=1}^n v_i \,m(x_i)\right\}\right)p(m)\,\nu^{\star}(z)\,dz\,dm\right\}.
\end{align*}
We will use the Poisson estimator \citep{papa09} which has been successfully used in MCMC approaches for diffusions \citep[see {\it e.g.}][]{Fearnhead10}. Consider, the equation
\begin{equation}
L_{\phi} =
 \exp\left\{-\int_D \phi(x) \,dx\right\}\leq 1
\label{main_eq}
\end{equation}
for $D\subset \mathbb{R}^p$
where $\phi(x)\geq 0$ for all $x\in D$ and $\int_D \phi(x)\,dx<\infty$. 
\medskip
The Poisson estimator of \eqref{main_eq} is introduced in the following Theorem where some properties are described. The proof of the Theorem can be found in the Appendix. We denote  the Poisson distribution with parameter $\lambda$ by $Pn(\lambda)$. 

\begin{thm}\label{PoissonThm}
Consider the following estimator, 
  
\begin{equation}
\hat{L}_{\phi} = \prod_{i=1}^K \left(1 - \frac{\phi(x_i)}{a\,C\,\kappa(x_i)}\right)
\label{est_eq}
\end{equation}
where $\kappa$ is a p.d.f. on $D$, $C>\frac{\phi(x)}{\kappa(x)}$
for $x\in D$, $a>1$, $K\sim\Pn(a\,C)$ and $x_i\stackrel{i.i.d.}{\sim} \kappa$. Then, 

$$\E[\hat{L}_{\phi}]=\exp\left\{-\int_D \phi(x)\,dx\right\}$$
and
\begin{align*}
\V[\hat{L}_{\phi}]&=
L_{\phi}^2
\left(\exp\left\{
\frac{1}{a\,C}\int_D\frac{\phi(x)^2}{\kappa(x)}\,dx\right\}-1\right)\\
&\leq 
L_{\phi}^2
\left(\exp\left\{
\frac{1}{a}\int_D \phi(x)\,dx\right\}-1\right)
<\infty.
\end{align*}
\end{thm}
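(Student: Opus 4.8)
The plan is to condition on the Poisson count $K$, use the independence of $x_1,x_2,\dots$ to turn the product into a product of identical single-factor expectations, and then sum over $K$ using the exponential generating function of the Poisson law. Writing $I=\int_D\phi(x)\,dx$, the first step is the conditional factorization
$$\E\left[\hat{L}_\phi\mid K=k\right]=\prod_{i=1}^k\E\left[1-\frac{\phi(x_i)}{a\,C\,\kappa(x_i)}\right]=\left(1-\frac{I}{a\,C}\right)^{\!k},$$
where the single-factor expectation is $1-(a\,C)^{-1}\int_D(\phi/\kappa)\,\kappa\,dx=1-I/(a\,C)$ because $x_i\sim\kappa$. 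Averaging over $K\sim\Pn(a\,C)$ via $\sum_{k\ge 0}(a\,C)^k t^k\,\edr^{-a\,C}/k!=\edr^{a\,C(t-1)}$ with $t=1-I/(a\,C)$ makes the $\edr^{-a\,C}$ cancel and collapses the exponent to $-I$, so $\E[\hat{L}_\phi]=\edr^{-I}=L_\phi$, which is the first claim.

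For the variance I would run the same argument on $\hat{L}_\phi^2=\prod_{i=1}^K(1-\phi(x_i)/(a\,C\,\kappa(x_i)))^2$. Expanding the square introduces the new quantity $J=\int_D\phi(x)^2/\kappa(x)\,dx$, since each single-factor second moment is $1-2I/(a\,C)+J/(a\,C)^2$. Conditioning on $K=k$ and summing over the Poisson law with $t=1-2I/(a\,C)+J/(a\,C)^2$ gives $aC(t-1)=-2I+J/(a\,C)$, whence $\E[\hat{L}_\phi^2]=\edr^{-2I+J/(a\,C)}=L_\phi^2\,\edr^{J/(a\,C)}$. Subtracting $(\E[\hat{L}_\phi])^2=L_\phi^2$ yields exactly the stated identity $\V[\hat{L}_\phi]=L_\phi^2(\edr^{J/(a\,C)}-1)$.

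The upper bound then follows from the hypothesis $C>\phi(x)/\kappa(x)$: writing $J=\int_D(\phi/\kappa)\,\phi\,dx<C\int_D\phi\,dx=C\,I$ gives $J/(a\,C)<I/a$, and monotonicity of the exponential converts the variance identity into the claimed inequality, with finiteness inherited from $\int_D\phi\,dx<\infty$. The computation is essentially routine once the conditioning is in place, so the main points requiring care are not analytical but structural: the strict inequality $C>\phi/\kappa$ together with $a>1$ guarantees that every factor lies in $(0,1]$, so that $\hat{L}_\phi$ is genuinely nonnegative (the property that makes it admissible in the pseudo-marginal scheme), and it is this same strict inequality that is needed to pass from the exact variance to the bound. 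The only interchange to justify is that of the product expectation with the Poisson sum, which is immediate from nonnegativity and absolute convergence of the exponential series.
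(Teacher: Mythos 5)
Your argument is correct and follows essentially the same route as the paper's proof: compute the single-factor first and second moments under $\kappa$, then apply the Poisson identity $\E[(1-b)^K]=\exp\{-\mu b\}$ to obtain the mean and the exact variance. You additionally spell out the final bound $\int_D\phi^2/\kappa\,dx < C\int_D\phi\,dx$, which the paper's appendix leaves implicit; that step is right and completes the stated inequality.
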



The estimator has the useful property that it is always positive. This contrasts with other approaches which define unbiased estimators of infinite sums using random truncation where it is difficult to ensure that estimates are always positive \citep[see {\it e.g.}][]{rhee15, Lyne15}.

Returning to the expression in  (\ref{laplace1}) and, again, assuming that $x_1,x_2,\dots,x_n$ are distinct, this can be re-expressed as
\begin{align}
L=& \exp\left\{-\int_{(\mathbb{R}^+)^n}\int_0^{\infty} \left(1-\exp\left\{-z\sum_{i=1}^n v_i \,m^{\star}_i\right\}\right)h(m^{\star})\,\nu^{\star}(z)\,dz\,dm^{\star}\right\}\nonumber\\
=& 
 \exp\left\{-\int_{(\mathbb{R}^+)^n}\int_0^{\infty} \sum_{k=1}^n v_k\, m^{\star}_k\int_0^z \exp\left\{-t\sum_{i=1}^n v_i\,m^{\star}_i\right\}\,dt\, h(m^{\star})\,\nu^{\star}(z)\,dz\,dm^{\star}\right\}
\nonumber\\
=&
 \exp\left\{-\int_{(\mathbb{R}^+)^n}\int_0^{\infty} \sum_{k=1}^n v_k\,m^{\star}_k\, h(m^{\star})\exp\left\{-t\sum_{i=1}^n v_i\,m^{\star}_i\right\}\,\int_t^{\infty} \nu^{\star}(z)\,dz\,dt\,dm^{\star}\right\}\nonumber\\
 =&
 \exp\left\{-\int_{(\mathbb{R}^+)^n}\int_0^{\infty} \sum_{k=1}^n v_k\, m^{\star}_k\, h(m^{\star})\exp\left\{-t\sum_{i=1}^n v_i\,m^{\star}_i\right\}\,T_{\nu^{\star}}(t)\,dt\,dm^{\star}\right\}\nonumber\\
 =&\prod_{k=1}^n L_k.
\label{NCoRM1}
\end{align}
where $T_{\nu^{\star}}(t)=\int_t^{\infty} \nu^{\star}(z)\,dz$ is the tail mass function for the L\'evy process with L\'evy intensity $\nu^{\star}$ and
\[
L_k=
 \exp\left\{-\int_{(\mathbb{R}^+)^n}\int_0^{\infty} v_k\, m^{\star}_k\,h(m^{\star})\exp\left\{-t\sum_{i=1}^n v_i\,m^{\star}_i\right\}\,T_{\nu^{\star}}(t)\,dt\,dm^{\star}\right\}
.
\]
The expression for $L_k$ has the form of (\ref{main_eq}) with $x=(z,m^{\star}_k)$, $D=(0,\infty)\times {\mathbb{R}^+}^n$ and \[
\phi(z,m^{\star}_k)= v_k\, m^{\star}_k\, h(m^{\star}_k)\exp\left\{-z\sum_{i=1}^n v_i\,m^{\star}_k\right\}\,T_{\nu^{\star}}(z).
\] 
Clearly $\int_D \phi(x)\,dx<\infty$. To use the Poisson estimator, a suitable density is 
\begin{equation}
\kappa(z,m^{\star}_k)=
\kappa_{\nu^{\star}}(z)\frac{m^{\star}_k\,h(m^{\star}_k)}
{\E[m^{\star}_k]}
\label{prop_levy}
\end{equation}
where $\kappa_{\nu^{\star}}(z)>T_{\nu^{\star}}(z)$ for all $z\in \mathbb{R}^+$. We use
$
C = v_k\,\E[m_k^{\star}]\,B
$
where $\frac{T_{\nu^{\star}}(z)}{\kappa_{\nu^{\star}}(z)}<B$ for all $z$.
 Suitable forms of $\kappa_{\nu^{\star}}$ for some popular nonparametric processes are given in Section \ref{ExamplesPoissonEst}.

In computation for more usual normalized random measures \citep{GW2011, Favaro2013}, 
we are interested in 
\begin{equation}\label{Peq2}
\E[\exp\{-vJ\}] = \exp\left\{-\int_0^{\infty} (1-\exp\{-vz\})\nu(z)\,dz\right\}\leq 1
\end{equation}
where $\nu(z)$ is a L\'evy process and the expectation is taken over all jumps on  $\mathbb{R}^+$. This expectation can, similarly, be  re-expressed as
\begin{align*}
= \exp\left\{-\int_0^{\infty} (1-\exp\{-vz\})\nu(z)\,dz\right\}
&=\exp\left\{- v\int_0^{\infty} T_{\nu}(t)\,\exp\{-vt\}\,dt\right\}
\end{align*}
which is (\ref{main_eq}) with $D=\mathbb{R}^+$ and $\phi(x)=v T_{\nu}(x)\exp\{-vx\}$. The estimator in (\ref{est_eq})
provides an unbiased estimator of \eqref{NCoRM1} and \eqref{Peq2} which can be used in the sampler described in this section.

\subsubsection{Controlling the variability of $\hat{L}_{\phi}$}

Pseudo-marginal Metropolis-Hastings algorithms converge to the correct distribution but the asymptotic variance of an average calculated using the algorithm depends on the variance of the unbiased estimator. For example, suppose that the unbiased estimator is an importance sampler. \cite{AV2016} show that the asymptotic variance of the pseudo-marginal sampler  decreases as the number of samples in the importance sampler increases (leading to an importance sampler with a lower asymptotic variance). Although we do not use an importance sampler, the Poisson estimator is closely related and it is intuitively reasonable that the asymptotic variance of averages calculated using the pseudo-marginal Metropolis-Hastings sampler will decrease as the variance of the Poisson estimator in \eqref{est_eq} decreases.

The variability of the Poisson estimator is controlled by $a$ with larger values of $a$ leading to a smaller variance. However, larger values of $a$ will also lead to longer computational times since the mean number of terms in $\hat{L}_{\phi}$ is $aC$. In this section, we will assume that the expected number of evaluations of the ratio $\phi(x)/\kappa(x)$  is $d$. Therefore, $d=aC$ for the estimator in 
\eqref{est_eq}. An alternative method for controlling the variability involves defining
the estimator $\hat{L}^{AVE}_{\phi}=\frac{1}{N}\sum_{i=1}^N \hat{L}^{(i)}_{\phi}$ where $\hat{L}^{(1)}_{\phi},\dots,\hat{L}^{(N)}_{\phi}$ are independent realisations of $\hat{L}_{\phi}$. The estimator has variance
\[
\V[\hat{L}_{\phi}^{AVE}]=
\frac{L_{\phi}^2}{N}
\left(
\exp\left\{\frac{1}{a\,C}\int_D\frac{\phi(x)^2}{\kappa(x)}\,dx\right\}-1\right)
=\frac{L_{\phi}^2}{N}
\left(
\exp\left\{\frac{N}{d}\int_D \frac{\phi(x)^2}{\kappa(x)}\,dx\right\}-1\right).
\]
since $d=a\,C\,N$.
It is straightforward to show that 
\[
\frac{L_{\phi}^2}{N+1}
\left(
\exp\left\{\frac{N+1}{d}\int_D \frac{\phi(x)^2}{\kappa(x)}\,dx\right\}-1\right)
>
\frac{L_{\phi}^2}{N}
\left(
\exp\left\{\frac{N}{d}\int_D \frac{\phi(x)^2}{\kappa(x)}\,dx\right\}-1\right)
\]
and so the variance of the estimator grows with $N$ for fixed $d$. This suggests that we should use 
$\hat{L}^{AVE}_{\phi}$ with $N=1$ which is the Poisson estimator in \eqref{est_eq}.
In this case, the choice of $\kappa(x)$ which minimizes the variance for fixed $d$ is $\kappa(x) = \frac{\phi(x)}{-\log L_{\phi}}$ which provides a criterion for choosing $\kappa(x)$.

As we have already mentioned the asymptotic variance of averages calculated using the pseudo-marginal algorithm will typically decreases as $a$ increases but the computational time will increase. Therefore, there is an optimal value of $a$ which is able to provide the lowest asymptotic variance for a fixed computational budget (number of evalulations of $\phi(x)/\kappa(x)$). \cite{Doucet} established an upper bound for the asymptotic variance under certain assumptions which allows this optimal value to be derived. They demonstrated that this value can be close to optimal when the assumptions are violated. In our context, their main assumption is
\[
\hat{L}_{\phi} = \epsilon L_{\phi} 
\]
where $\log\epsilon\sim\N(-\sigma^2/2,\sigma^2)$. They refer to $\sigma^2$ as the noise variance and it is straightforward to show that 
\[
\sigma^2
=\frac{1}{a\,C}\int_D  \frac{\phi(x)^2}{\kappa(x)}\,dx.
\]
\cite{Doucet} showed that the optimal value of the noise variance (in terms of asymptotic variance), $\sigma^2_{opt}$, depends on the properties of the chain but provide guidelines on how this can be approximated.  Following the derivation of \cite{Doucet}, the optimal value of $a$, for fixed $\kappa$, is
\[
a_{opt}=\frac{1}{\sigma^2_{opt}\,C}\int_D  \frac{\phi(x)^2}{\kappa(x)}\,dx.
\]
In practice, we have found that the value $a=8$ works well for the processes considered in this paper.

\subsubsection{Examples}\label{ExamplesPoissonEst}

\cite{Brix99} provided a bound for the tail-mass integral of the generalized gamma process which is extended to the stable-Beta process by \cite{AP2016}. However, both bounds are not tight and we suggest tighter bounds for both processes. Indeed, the estimator introduced in Theorem \ref{PoissonThm} with the proposal in
\eqref{prop_levy}
requires draws from a Poisson distribution whose mean is proportional to $v_k\,\E[m_k^{\star}]\,D$
where $\frac{T_{\nu^{\star}}(z)}{\kappa_{\nu^{\star}}(z)}<B$ for all $z$. Therefore, better choices of $\kappa_{\nu^{\star}}(x)$ can improve the computational efficiency of the method by requiring a smaller value of $B$.

\bigskip



%

\noindent\textit{Generalized gamma process}

\bigskip
The generalized gamma process has L\'evy density $\nu^{\star}(y)=\frac{1}{\Gamma(1-\sigma)}y^{-1-\sigma}\exp\{-\lambda y\}$ and the tail-mass functions $T_{\nu^{\star}}(t)$ is an incomplete gamma function. It is straightforward to show that $T_{\nu^{\star}}(t) < \tilde\kappa_{\nu^{\star}}(t)$ where
\[
\tilde\kappa_{\nu^{\star}}(t) = \left\{
\begin{array}{ll}
\frac{1}{\sigma\Gamma(1-\sigma)} (t^{-\sigma}-1) & t<b\\
\frac{1}{\sigma\Gamma(1-\sigma)} (b^{-\sigma}-1)\exp\{-\lambda(t - b)\} & t\geq b
\end{array}\right.
\]
and $b=0.65$. A suitable choice of $D$ is $D=\frac{1}{\sigma\Gamma(1-\sigma)}\left[\frac{b^{1-\sigma}}{1 - \sigma} - b + b^{-\sigma} - 1\right]$ and the
bounding p.d.f. is $\kappa_{\nu^{\star}}(t)=\frac{1}{D}\tilde\kappa_{\nu^{\star}}(t)$.
  Consider $\kappa_{\nu^{\star}}(t)$ truncated to $t<b$, taking the transformation $y=\frac{1}{\sigma}(t^{-\sigma}-1)$  leads to the density
 $\kappa(y)\propto y(\sigma y+1)^{-1/\sigma-1}$ truncated to $y>\frac{1}{\sigma}(b^{-\sigma}-1)$. This can be expressed as a mixture of gamma distributions where $y\vert \Xi\sim\Ga(2, \Xi)$, $\Xi\sim\Ga(1/\sigma - 1,1/\sigma)$. 
 
As $\sigma\rightarrow 0$ for $\lambda=1$,  the generalized gamma process converges to the gamma process which has L\'evy density $\nu(z)=z^{-1}\,\exp\{-z\}$ and
$
T_{\nu}(t)=\int_t^{\infty} \nu(y)\,dy = \mbox{E}_1(t)
$
where $\mbox{E}_1(t)$ is the exponential-integral function. Both the bounding p.d.f. and simulation scheme for $t<b$ also converge. It is straightforward to show that the limit is
\[
\tilde\kappa_{\nu^{\star}}(t) = \left\{
\begin{array}{ll}
-\log t & t<b\\
-(\log b) \exp\{-(t - b)\} & t\geq b
\end{array}\right..
\]
with $D = b - b \log b - \log b$. The appropriate transformation for $t<b$ is  $y = -\log(t)$ which has p.d.f. $z\exp\{-z\}$, {\it i.e.} $z\sim\Ga(2,1)$ truncated to $y> -\log b$.

If $\lambda=0$, the generalized gamma process is stable process. However, the tail mass function is infinite for a stable process and this simulation scheme is not possible.

\bigskip

\noindent\textit{Stable-Beta process}

\bigskip

The stable-Beta process has L\'evy density 
$\nu^{\star}(z)=\frac{\Gamma(\phi)}{\Gamma(\sigma+\phi)\Gamma(1-\sigma)} z^{-\sigma-1}(1-\lambda\,z)^{\sigma+\phi-1}$ for  $0<z<1/\lambda$.
It is straightforward to show that $T_{\nu^{\star}}(t) < \tilde\kappa_{\nu^{\star}}(t)$ where
\[
\tilde\kappa_{\nu^{\star}}(t) = \left\{
\begin{array}{ll}
\frac{\Gamma(\phi)}{\sigma\Gamma(\sigma+\phi)\Gamma(1-\sigma)} 
 (t^{-\sigma}-1) 
 & t<b\\
 \frac{\Gamma(\phi)}{\sigma\Gamma(\sigma+\phi)\Gamma(1-\sigma)} 
 (b^{-\sigma}-1) 
\frac{(1 -\lambda t)^{\gamma+\sigma}}{(1 - \lambda b)^{\gamma+\sigma}}
 & t\geq b
\end{array}\right.
\]
and $b=0.65$.
A suitable choice of $D$ is $D=
\frac{\Gamma(\phi)}{\Gamma(\sigma+\phi)\Gamma(1-\sigma)} \left[\frac{b^{1-\gamma}}{1-\gamma} - b + \frac{b^{-\sigma}-1}{\lambda(\gamma+\sigma+1)}(1-\lambda b)\right]$
 and the
bounding p.d.f. is $\kappa_{\nu^{\star}}(t)=\frac{1}{D}\tilde\kappa_{\nu^{\star}}(t)$. 
As $\sigma\rightarrow 0$ for $\lambda=1$, the stable-Beta process converges to the Beta process which has L\'evy density $\nu(y)=\frac{1}{\Gamma(\gamma)}y^{-1}(1 - y)^{\gamma-1}$ for $0<y<1$. In this case, the limit of $\tilde\kappa_{\nu^{\star}}(t)$ is
\[
\tilde\kappa_{\nu^{\star}}(t)  = \left\{
\begin{array}{ll}
-\log t & t<b\\
\frac{-\log b}{(1-b)^{\gamma}}(1-t)^{\gamma}
 & t\geq b
\end{array}\right.
\]
and the limit of $D$ is $D=b - b \log b - \frac{\log b}{\gamma+1}(1-b)$.

\section{Illustrations}

\subsection{Example 1: Discrete regressors}

The algorithms developed in this paper are illustrated using an analysis of hematological data arising from a dose-escalation study which has previously been analysed by \cite{MuRo97}. The data are white blood cell counts over time for a sample of 52 patients receiving different levels of two treatments: cyclophosphamide (CTX) and a second drug (GM-CSF). The data for each patient is summarized as the maximum likelihood estimates from a non-linear regression model with seven parameters  fitted to that patient's time profile. 
The  model assumes that the mean response at time $t$ with parameters $\theta=(z_1,z_2,z_3,\tau_1,\tau_2,\beta_1)$ is given by
\[
f(\theta,t)=\left\{
\begin{array}{ll}
z_1 & t<\tau_1\\
rz_1+ (1 - r) g(\theta,\tau_2) & \tau_1\leq t<\tau_2\\
g(\theta,t) & t\geq \tau_2
\end{array}
\right.
\]
where $r=(\tau_2-t)/(\tau_2-\tau_1)$ and $g(\theta, t)=z_2 + z_3
/[1+\exp\{2.0 - \beta_1(t-\tau_2)\}]$. The model implies that the white blood cell count is constant (at level $z_1$) before $\tau_1$ followed by a linear progression between $\tau_1$ and $\tau_2$ and a logistic recovery after $\tau_2$. The parameters $z_2$ and $z_3$ control the white blood cell count at the start and end of recovery.
\cite{deiorio} applied an ANOVA-DDP model to these data which assumes a mixture model with constant weights and an ANOVA model for the locations for each treatment. In contrast, we fitted a mixture model with weights that vary with the treatment combination but with locations that do not depend on the treatment level. Specifically, we assume that $y_i$ are the estimated parameters for the $i$-th patient and that $x_{i,1}$ is the level of CTX and $x_{i, 2}$ is the level of GM-CSF. The model is 
\[
y_i\vert s_i \sim\N(\mu_{s_i}, \Sigma),\qquad \mu_k\sim\N(\mu_0, \lambda^{-1}\Sigma),
\]
\[
\Sigma^{-1}\sim\mbox{W}(\nu, \Psi)
\]
\[
p(c_i=k)=\frac{J_k \exp\left\{\alpha^{(k)}_{x_{i,1}}+\beta^{(k)}_{x_{i,2}}+
\gamma^{(k)}_{x_{i,1},x_{i,2}}\right\}}{\sum_{l=1}^{\infty} J_l \exp\left\{\alpha^{(l)}_{x_{i,1}}+\beta^{(l)}_{x_{i,2}}+\gamma^{(l)}_{x_{i,1},x_{i,2}}\right\}}
\]
\[
\alpha^{(k)}_j\sim\N(0,\sigma^2_1),
\qquad 
\beta^{(k)}_j\sim\N(0,\sigma^2_2),\qquad
\gamma^{(k)}_{i,j}\sim\N(0,\sigma^2_{1,2}).
\]
The directing L\'evy process is taken to be a gamma process. The model assumes a two-way ANOVA model with interaction for the logarithm of the weights. This does not place restriction on the combination of weights but does encourage similar weights for similar combinations of levels. The priors were
$\sigma_1^{2}\sim\mbox{Ga}(1,2)$,
$\sigma_2^{2}\sim\mbox{Ga}(1,2)$,
$\sigma_{1,2}^{2}\sim\mbox{Ga}(1,2)$ and $M\sim\mbox{Ga}(1, 1)$. For the purposes of illustration, we set $\mu_0$ equal to the sample mean of the data,
$\Psi = \frac{1}{9(\nu - 8)}\hat\Sigma$ where $\hat\Sigma$ is the covariance of the data which implies that the prior mean  of  $\Sigma$ is $\frac{1}{9}\hat\Sigma$ and we choose $\lambda=0.01$.
 The MCMC algorithms was run for a total of 35\ 000 iterations. The first 5\ 000 were used as a burn-in with the subsequent values thinned every fifth sample. This gave a sample of 6\ 000 values.

\begin{figure}[h!]
\begin{center}
\includegraphics[clip, trim=0mm 00mm 80mm 210mm]{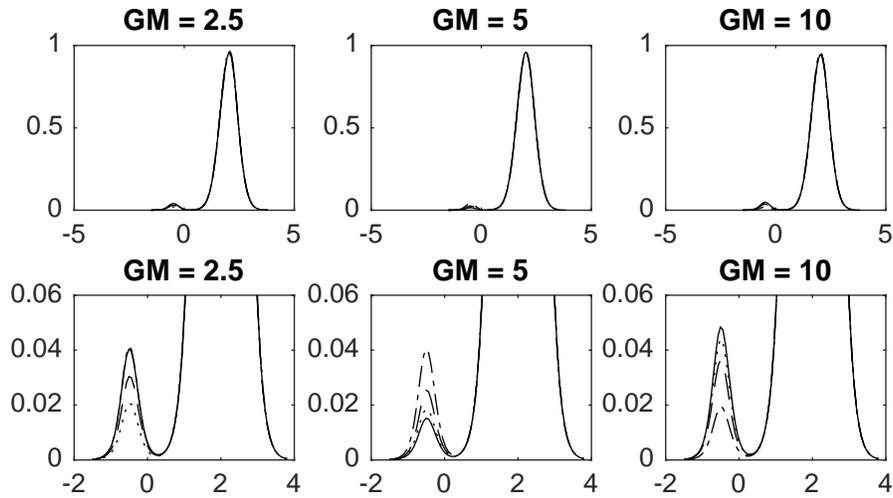}
\end{center}
\caption{Posterior mean probability density of $z_1$ for three levels of GM-CSF and four levels of CTX shown as: solid line (CTX=1.5), dashed line (CTX=3.0), dot-dashed line (CTX=4.5) and dashed line (CTX=6.0)}\label{group_var1}
\end{figure}

\begin{figure}[h!]
\begin{center}
\includegraphics[clip, trim=0mm 00mm 80mm 210mm]{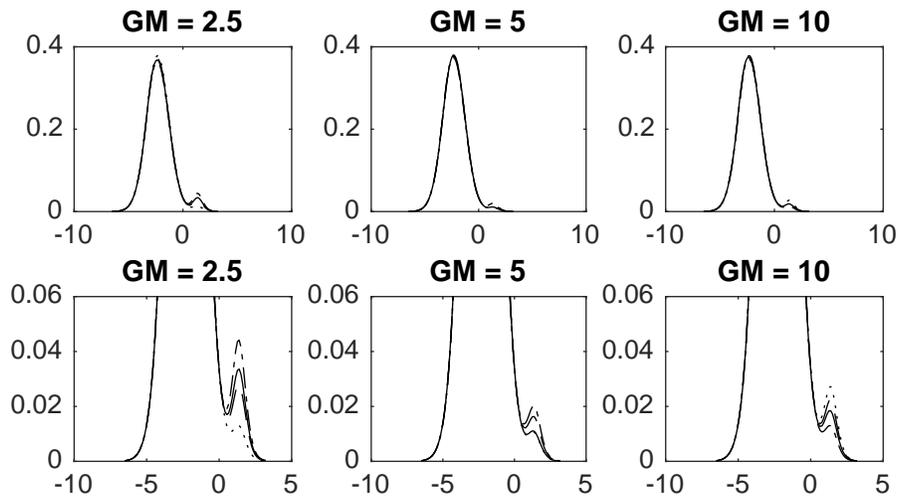}
\end{center}
\caption{Posterior mean probability density of $z_2$ for three levels of GM-CSF and four levels of CTX shown as: solid line (CTX=1.5), dashed line (CTX=3.0), dot-dashed line (CTX=4.5) and dashed line (CTX=6.0)}\label{group_var2}
\end{figure}
The inference about the marginal probability of two parameters $z_1$ and $z_2$  are shown in Figures \ref{group_var1} and \ref{group_var2}. The parameter $z_1$ is the initial white blood cell count. The distribution is bi-modal with the size of the smaller mode increasing with GM-CSF. This indicates that there are differences in the proportion of patients with lower white blood cell count across the different treatments.
 The parameter $z_2$ controls the level of white-blood cells when recovery begins and this is again bi-modal with the smaller mode decreasing  with GM-CSF for $z_2$.

\begin{figure}[h!]
\begin{center}
\includegraphics[trim=0mm 0mm 80mm 210mm, clip]{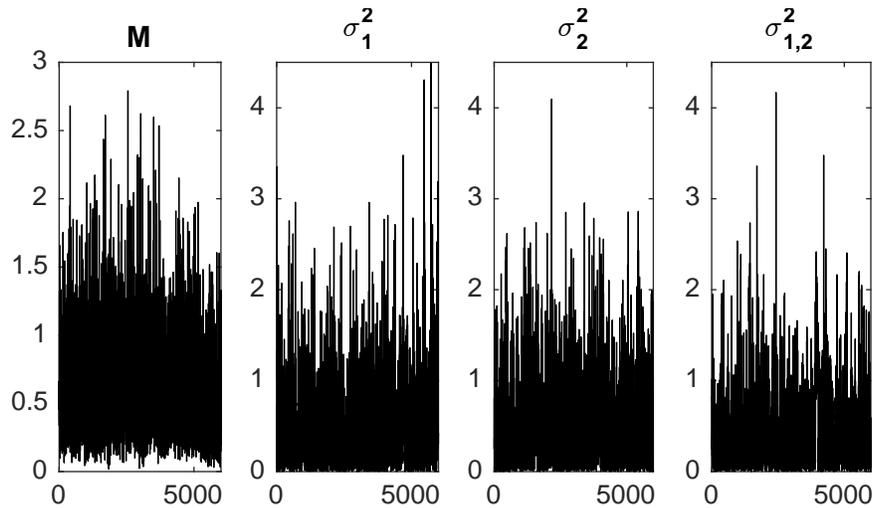}
\end{center}
\caption{Trace plots of the parameters $M$, $\sigma^2_1$, $\sigma^2_2$ and $\sigma^2_{1,2}$.}\label{group_trace}
\end{figure}
Figure~\ref{group_trace} shows trace plots for the total mass parameter $M$ and the three parameters controlling the differences between jumps at each treatment level. These clearly show good performance of the sampler for this problem.

\subsection{Example 2: Continuous Regressors}

A regression model is used to define an infinite mixture model with regressor dependent weights. We observe pairs $(x_1,y_1),\dots,(x_n,y_n)$ where $x_i\in\mathbb{R}^p$ and $y_i\in\mathbb{R}$ and use the model
\[
y_i\sim \N(\theta_{s_i}, a\sigma^2),\qquad \theta_k\sim\N(\mu, (1-a)\sigma^2),
\]
\[
p(c_i=k)=\frac{J_k\, \exp\{r_k(x_i)\}}
{\sum_{l=1}^{\infty} J_l\,\exp\{r_l(x_i)\}}
\]
where $r_1(x), r_2(x), r_3(x), \dots$ are independent Gaussian processes. A generalized gamma directing L\'evy process is used with $\lambda=1$ and three values of $\sigma$: $\sigma=0$ (a gamma process), $\sigma=0.1$ and $\sigma=0.5$.

We apply the model to data from a simulated motorcycle accident used to test crash helmets \citep{silverman85}, which are available as the \verb+mcycle+ data frame in the R package MASS. The data are head accelerations (in $g$)  measured at different times in milliseconds after impact.
We assume that the Gaussian processes have covariance function 
$C(x,y) = \phi\exp\left\{-\frac{\parallel x-y\parallel}{L}\right\}$ where $\parallel \cdot\parallel$ is Euclidean distance and $L$ is the lengthscale. The priors are $\alpha\sim\mbox{U}(0,1)$, $p(\mu,\sigma^2)\propto \sigma^{-2}$, $L\sim\mbox{Ga}(1,1)$, $M\sim\mbox{Ga}(1, 1)$, and $\phi^{-1}\sim\mbox{Ga}(1,4)$. The prior for $\phi$ is chosen so that $\log m_k(x_i)$ typically takes values in $(-4,4)$. The MCMC algorithms was a total of 33\ 000 iterations. The first 3\ 000 were used as a burn-in with the subsequent values thinned every third sample. This gave a sample of 10\ 000 values.

\begin{figure}[h!]
\begin{center}
\begin{tabular}{ccc}
\includegraphics[trim=0mm 0mm 160mm 260mm, clip]{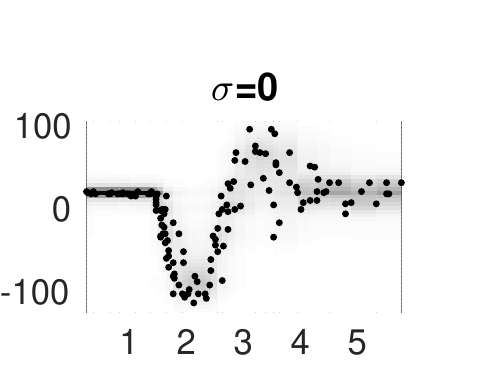}
\includegraphics[trim=0mm 0mm 160mm 260mm, clip]{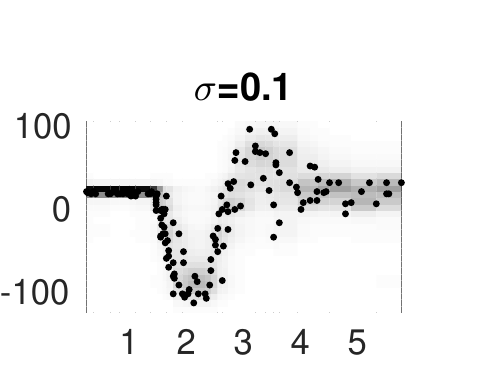}
\includegraphics[trim=0mm 0mm 160mm 260mm, clip]{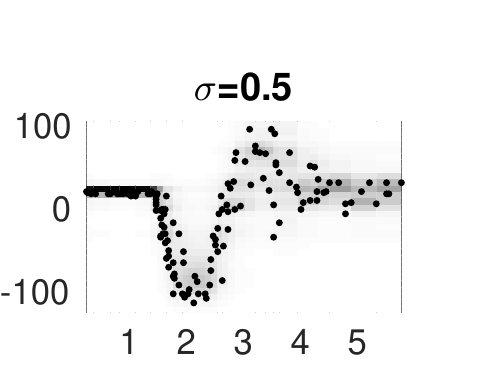}
\end{tabular}
\end{center}
\caption{Motorcyle data: Data (dots) and posterior mean density of $y|x$ (darker colours show larger density) with $\sigma=0$, $\sigma=0.1$ and $\sigma=0.5$.}\label{mcycle_pred}
\end{figure}
Figure~\ref{mcycle_pred} shows the posterior mean of the conditional density of head acceleration given time from impact for the three values of $\sigma$ with the data superimposed. In each case, the model was able to follow the data and capture the changing the heterogeneity in the variance. The inference seems robust to the choice of $\sigma$.

\begin{figure}[h!]
\begin{center}
\includegraphics[trim=10mm 0mm 50mm 260mm, clip]{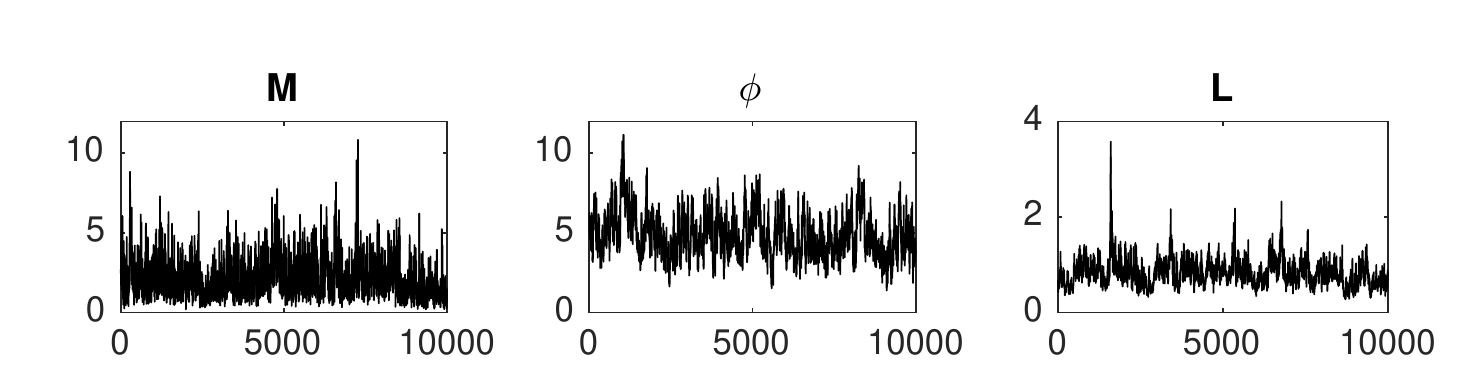}
\end{center}
\caption{Motorcyle data: Trace plots of the parameters $M$, $\phi$ and $L$ for $\sigma=0$ (the gamma process).}
\label{mcycle_trace}
\end{figure}
Trace plots for the three parameters $M$, $\phi$ and $L$ for the case $\sigma=0$ (the gamma process) are shown in figure~\ref{mcycle_trace}. These clearly show that the parameters are mixing well across the MCMC chain.

\subsection{Comparison of predictive performance}

We ran a simulation exercise to understand how the NCoRM regression model developed in this paper compared to two commonly used dependent nonparametric priors: the single-$p$ dependent Dirichlet process \citep{deiorio} and a probit stick-breaking process mixture \citep{roddun11}. The methods were compared by 10-fold cross-validation using simulated data sets of size 100 (leading to training data sets with 90 observations) according to the 
 out-of-sample log-predictive scores, {\it i.e.}
\[
\mbox{LPS}=-\frac{1}{100}\sum_{i=1}^{10} \sum_{j=1}^{10}
\log p\left(\left.y^{test}_{i,j}\right\vert y^{train}_{i,1},\dots,y^{train}_{i,90}\right).
\]
where $y^{test}_{i,1},\dots,y^{test}_{i,10}$ is the $i$-th testing sample and 
$y^{train}_{i,1},\dots,y^{pred}_{i,90}$ is the $i$-th training sample. Three sets of data were simulated to cover different modelling situations. The first two data sets used regressor-dependent mixture models covering the simple case of a two component mixture and a more complicated scenario with four components. The third dataset used a non-linear regression model which depends on four parameters $a$, $b$, $c$ and $d$. Different values of the parameters lead to different features of the data such as homoscedascity or heteroscedascity, jumps or different levels of smoothness. In all cases, there was a single regressor which was generated uniformly on $(0,1)$. The detailed descriptions of the data sets are given below.

\begin{itemize}

\item Simulated Data Sets I

The responses were simulated as
\[
y\sim
\left\{\begin{array}{ll}
\N(-1, \sigma^2) &\mbox{ if }s=1\\
\N(1, \sigma^2) & \mbox{ if }s=2
\end{array}\right.
\]
where
\[
p(s=1)=\frac{\exp\{r\sin(2\pi x)\}}{1+\exp\{r\sin(2\pi x)\}},\qquad
p(s=2)=\frac{1}{1+\exp\{r\sin(2\pi x)\}}.
\]
We consider two values of $\sigma$ (0.1 and 0.5) which allow for different levels of separation between the two mixture components and two values of $r$ (1 and 2) which control the rate at $p(s=1)$ changes over the range of $x$.

\item Simulated Data Sets II 

The responses were simulated as
\[
y
\sim\left\{\begin{array}{ll}
\N(-1, \sigma^2) & \mbox{ if }s=1\\
\N(1, \sigma^2) & \mbox{ if }s=2\\
\N(-2, \sigma^2) & \mbox{ if }s=3\\
\N(2, \sigma^2) &  \mbox{ if }s=4
\end{array}\right.
\]
where $
p(s=1)\propto \exp\{2\sin(2\pi x)\}$, $p(s=2)\propto 1/2 + 2/5 (x - 1/2)$,
$p(s=3)\propto 1/2 - 2 (x - 1/2)^2$ and $p(s=4)\propto 1$. 
Again, we consider two values of $\sigma$ (0.1 and 0.5)  to give different levels of separation between the clusters.

\item Simulated Data Sets III 

The responses were simulated as
\[
y = g_{a,b}(x) + \I(c=1) h(x) +
0.1\,\epsilon\,k(x)^{\I(d=1)}
\]
where 
\[
g_{a,b}(x) = \left\{
\begin{array}{cc}
0 & x<a\\
\sin(2\pi (x-a)/(b-a)) & a \leq x\leq b\\
0 & x>b
\end{array}
\right.,
\]
\[
h(x) = \left\{
\begin{array}{cc}
-2 & x < 1/3\\
-3/4 & 1/3 < x < 3/4\\
0 & x>3/4
\end{array}
\right.,
\]
$k(x) = 0.15(1 + 19|\sin (2\pi)|)$
and $\epsilon\sim\N(0,1)$. Different choices of $a$, $b$, $c$ and $d$ lead to responses which have a nonlinear mean and potentially heteroscedascity. If $c=0$ and $d=0$, the means of responses have a sine wave scaled to the interval $(a,b)$ (with zero mean outside the interval) with homoscedastic noise.  Additionally, the mean of the responses jumps at $1/3$ and $3/4$  if $c=1$ and the errors are heteroscedastic if $d=1$.

\end{itemize}

\begin{table}[h!]
\begin{center}
\begin{tabular}{|c|c|ccc|}\hline
    $\sigma$   &  $r$   & 	Probit SB       &	DDP &	NCoRM\\\hline
 0.1 &  2 &	-0.33	& -0.18      &	-0.36\\
 0.1  &  1 &	0.30	& 0.51       &	0.26\\
 0.5  & 2 &	1.35	& 1.41       &	1.26\\
 0.5  & 1 &	1.53	& 1.54       &	1.50\\\hline
\end{tabular}
\end{center}
\caption{LPS of the probit stick-breaking, dependent Dirichlet process and NCoRM mixtures for Simulated Data Sets I.}
\label{sim_data_I}
\end{table}
The LPS for the three different nonparametric priors with Simulated Data Sets I are given in Table~\ref{sim_data_I}. The NCoRM mixture outperformed both the DDP and probit stick-breaking processes for the four combination of $\sigma$ and $r$.
\begin{table}[h!]
\begin{center}
\begin{tabular}{|c|ccc|}\hline
    $\sigma$      & 	Probit SB       &	DDP &	NCoRM\\\hline
 0.1  &  -0.22   & -0.15      &	-0.22\\
 0.5  &	1.49	& 1.54       &	1.48\\\hline
\end{tabular}
\end{center}
\caption{LPS of the probit stick-breaking, dependent Dirichlet process and NCoRM mixtures for Simulated Data Sets II.}
\label{sim_data_II}
\end{table}
The LPS's for Simulated Data Sets II (Table~\ref{sim_data_II}) show that the DDP outperformed the other methods for $\sigma=0.1$ and the NCoRM outperformed the other methods for $\sigma=0.5$.
\begin{table}[h!]
\begin{center}
\begin{tabular}{|c|c|c|c|ccc|}\hline
$a$ & $b$   &   $c$   &   $d$   &     Probit SB	 &   DDP &	NCoRM\\\hline
0 & 1 &   0  &  0 &  0.80 &  -0.81  &  -0.31\\
0 & 1 &   0 &   1 &  0.66 &  -0.42  &  -0.14\\
0 & 1 &   1 &   0 &  0.99 &  -0.01  &   0.25\\
0 & 1 &   1 &   1 &  0.80 &  0.12   &   0.25\\
1/4 & 3/4 &   0 &   0 &  0.58 &  -0.81  & -0.27\\
1/4 & 3/4 &   0 &   1 &  0.87 &  -0.55  &  -0.43\\
1/4 & 3/4 &   1 &   0 &  0.48 &  -0.24  &  -0.16\\
1/4 & 3/4 &   1 &   1 &  0.89 &  -0.06  &  -0.08\\\hline
\end{tabular}
\end{center}
\caption{LPS of the probit stick-breaking, dependent Dirichlet process and NCoRM mixtures for Simulated Data Sets III.}
\label{sim_data_III}
\end{table}

The LPS for Simulated Data Sets III are shown in Table~\ref{sim_data_III}. For these data sets, the methods were ranked in the same order with the DDP giving the best performance and the NCoRM mixture outperforming the 
 probit stick-breaking mixtures. The difference between the DDP and NCoRM was largest for the model without jumps and with homoscedasticity ($c=0$ and $d=0$). This is not surprising since a Gaussian process with normal errors would provide good approximation of the sine curve and models which only allow dependence through the weights can only approximate the curve using piecewise constant fits. If there are jumps, the advantage of the DDP over the NCoRM was reduced.
 In all case, the NCoRM mixture substantially outperformed the probit stick-breaking mixture. 
 This reflects the construction of the probit stick-breaking processes.  In all stick-breaking processes, the weights are stochastically ordered and the probit stick-breaking process assumes that the atom with largest {\it a priori} expected weight does not  depend on $x$. Although the data can change the order {\it a posteriori}, this ordering persists in data sets of the size considered in these simulated examples.
 
The results of these simulations suggest some guidelines which can be used in more general situations. The model displayed in equation \eqref{DDP} is very general but in this form is rarely used in real situations. By allowing the parameter $\theta$ and the weights $w_k$ to depend on the regressor, the model becomes extremely flexible and prone to overfit the data. Therefore, the models considered are typically used as special cases of the model displayed in equation \eqref{DDP}. We prefer the NCoRM mixture model to the DDP mixture model if we can identify subpopulations with different levels of response and which are associated with different regressor values. Clearly, this is the case in simulated datasets I and II but also simulated dataset III when jumps are introduced ($c=1$). On the other hand, the DDP works well if there is a wide range of responses in each subpopulation. 
 
\section{Conclusions}

Normalized compound random measures are a large class of dependent nonparametric processes. The jumps of the processes are expressed as the product of a jump from a L\'evy process and a random variable. This allows the dependence of the nonparametric processes to be modelled through the dependence in the random variables. In this paper, we have developed Markov chain Monte Carlo methods to estimate nonparametric mixture models where the mixing measure is given a normalized compound random measure prior with a wide-range of dependences between the underlying random variables. The NCoRM approach could be generalized to allow the jump locations to depend on regressors and the MCMC method could be simply extended. The examples illustrate priors constructed using linear models and Gaussian processes. Other types of dependence could be included such as time series models, spatial models or hierarchical models. The MCMC methods are efficient and depend on approximating the tail mass integral of a L\'evy process. Examples of appropriate approximations are given for the most popular classes of L\'evy processes used in Bayesian nonparametrics. A simulation study illustrates that the NCoRM can provide better out-of-sample predictive performance than probit stick-breaking process mixtures in a range of simulated data sets and can outperform DDP mixtures if the mean of the responses does not vary smoothly.

\appendix
\section{Proofs}

\begin{proof}[Proof of Theorem \ref{PoissonThm}]
\[
\E\left[1-\frac{\phi(x_i)}{a\,C\,\kappa(x_i)}\right]
=1-\frac{1}{a\,C}\E\left[\frac{\phi(x_i)}{\kappa(x_i)}\right]
=1-\frac{1}{a\,C}\int_D \frac{\phi(x_i)}{\kappa(x_i)}\kappa(x_i)\,dx_i
=1-\frac{1}{a\,C}\int_D \phi(x_i)\,dx_i
\]
\begin{align*}
\E\left[\left(1-\frac{1}{a\,C}\frac{\phi(x_i)}{\kappa(x_i)}\right)^2\right]
&=
1 - 2\frac{1}{a\,C}\int_D \phi(x_i)\,dx_i
+\frac{1}{a^2\,C^2}\int_D \frac{\phi(x_i)^2}{\kappa(x_i)}\,dx_i.
\end{align*}
If $k\sim\Pn(\mu)$
\[
\E[(1-b)^k]=\exp\{-\mu b\}
\]
and so
\[
\E[\hat{L}_{\phi}]=\exp\left\{-\int_D \phi(x)\,dx\right\}
\]
\begin{align*}
\V[\hat{L}_{\phi}]&=\E[\hat{L}_{\phi}^2]-\E[\hat{L}_{\phi}]^2\\
&=\exp\left\{-
 2\int_D \phi(x)\,dx
+\frac{1}{a\,C}\int_D \frac{\phi(x)^2}{\kappa(x)}\,dx\right\}
-\exp\left\{-2\int_D \phi(x)\,dx\right\}\\
&=L_{\phi}^2
\left(\exp\left\{\frac{1}{a\,C}
\int_D \frac{\phi(x)^2}{\kappa(x)}\,dx\right\}-1\right)\\
\end{align*}
\end{proof}

\section{Additional details of computational methods}

\subsubsection*{Updating $J_k$}

Update $J_k$ from the full conditional distribution proportional to
\begin{align*}
 J_k^{n_k}\,\exp\left\{-\sum_{i=1}^n v_i\,  J_k \,m_{k,i}\right\}\,\nu_{\xi}^{\star}(J_k).
\end{align*}
This variable can be updated in closed form if $\nu_{\xi}^{\star}$ is the L\'evy density of a generalized gamma process or by an adaptive Metropolis-Hastings random walk \citep{atros} if the full conditional does not have closed form.

\subsubsection*{Updating $m_k$}

Update $m_k$ from the full conditional distribution proportional to
\begin{align*}
\prod_{i=1; c_i=k}^n m_{k,i}\,\exp\left\{-\sum_{i=1}^n v_i\,  J_k\, m_{k,i}\right\}\,h(m_k\vert \tau).
\end{align*}
This variable can be updated using an adaptive Metropolis-Hastings random walk \citep{atros} if the full conditional does not have closed form.

\subsubsection*{Updating $v_i$}

The full conditional distribution is 
\[
 \exp\left\{-v_i \prod_{k=1}^K  J_k\, m_{k,i}\right\}
\hat{L}.
\]
The parameter $v_i$ is updated using an interweaving step \citep{yumeng11}. The first part of the step
updates using an adaptive Metropolis-Hastings random walk \citep{atros}. A new value  $v'_i$ is proposed and a new estimate $\hat{L}'$ conditional on $v'_i$ is calculated. The proposed values $v'_i$ and $\hat{L}'$ are accepted with probability
\[
\min\left\{1,
\frac{\exp\left\{-v'_i \prod_{k=1}^K  J_k\, m_{k,i}\right\}
\hat{L'}}{\exp\left\{-v_i \prod_{k=1}^K  J_k\, m_{k,i}\right\}
\hat{L}}
\right\}.
\]
The second part of the step uses the re-parameterization 
$ \tilde{J}_i = v_1 J_i$ and $\tilde{v}_i =  \frac{v_i}{v_1}$ for $i>1$ which implies that $\tilde{J}_j\,\tilde{v}_i = J_j \,v_i$. The full conditional of $v_1$ (conditioning on $\tilde{v}_i$ and $\tilde{J}_i$) has density proportional to
\[
v_1^{-K}\ \left\{ \exp\left\{-v_1 \sum_{k=1}^K J_k \,m_{k,1}\right\}
 \,\nu_{\xi}^{\star}\left(\frac{\tilde{J}_k}{v_1}\right)\right\}\hat{L}.
\]
The parameter is updated using an adaptive Metropolis-Hastings random walk \citep{atros} where $L'$ is calculated conditional on $v_1'$ and $\tilde{v}_i$ for $i>1$. The proposed values are accepted with probability
\[
\min\left\{1,\frac{v_1'^{-K}\ \left\{ \exp\left\{-v_1' \sum_{k=1}^K J_k \,m_{k,1}\right\}
 \,\nu_{\xi}^{\star}\left(\frac{\tilde{J}_k}{v_1'}\right)\right\}\hat{L}'}{v_1^{-K}\ \left\{ \exp\left\{-v_1 \sum_{k=1}^K J_k \,m_{k,1}\right\}
 \,\nu_{\xi}^{\star}\left(\frac{\tilde{J}_k}{v_1}\right)\right\}\hat{L}}\right\}.
\]

\subsubsection*{Updating $\xi$}

The full conditional distributions of $M$ has density proportional to
\[
\hat{L}\,p(\xi)\,\prod_{k=1}^K\nu^{\star}_{\xi}(J_k).
\]
This parameter can be updated using an adaptive Metropolis-Hastings random walk \citep{atros}.

\subsubsection*{Updating $M$}

The full conditional distributions of $M$ has density proportional to
\[
p(M)\,\hat{L}\,M^K.
\]
This parameter can be updated using an adaptive Metropolis-Hastings random walk \citep{atros}.

\subsubsection*{Updating $\tau$}\label{CompDet}

The full conditional distribution of $\tau$ has density proportional to 
\[
p(\tau)\hat{L}\,\prod_{k=1}^K p(m_k\vert \tau)
\]
This parameter can be updated using an adaptive Metropolis-Hastings random walk \citep{atros}. We have found that 
an interweaved update \citep{yumeng11} can lead to much better mixing. If $m_k(x)$ is a Gaussian process and $\tau$ is the stationary variance then we can write $m_{k,i} = \sqrt\tau m^{(s)}_{k,i}$. The interweaved update uses an adaptive Metropolis-Hastings step where $\hat{L}$ is calculated for the proposed value $\tau'$ and the full conditional density is 
\begin{align*}
p(\tau)\,\hat{L}\,\tau^{n/2}\,\exp\left\{-\sqrt{\tau}\sum_{i=1}^n v_i\,  J_k\, m^{(s)}_{k,i}\right\}.
\end{align*}
\end{document}